\newcommand{\id}{\mathrm{id}}
\newcommand{\States}{\mathrm{Stat}}
\newcommand{\E}{\mathcal{E}}   % expectation/projection
\newcommand{\I}{\mathcal{I}}   % instrument
\newcommand{\J}{\mathcal{J}}
\newcommand{\Lind}{\mathcal{L}}
\newcommand{\HS}{\mathcal{H}}
\newcommand{\barotimes}{\mathbin{\overline{\otimes}}} % spatial tensor product
\providecommand{\norm}[1]{\left\lVert #1 \right\rVert}
\title{Composite N--Q--S: Serial/Parallel Instrument Axioms, Bipartite Order-Effect Bounds, and a Monitored Lindblad Limit}
\author{Kazuyuki Yoshida\\
Independent Researcher, Minoo-City, Osaka, Japan\\
\texttt{kazuyuki.manny.yoshida@gmail.com}}
\date{\today}
  \def\NQS{N--Q--S}%
  \theoremstyle{plain}% amsthm
\declaretheorem[style=plain,numberwithin=section,name=Theorem]{theorem}
\declaretheorem[sibling=theorem,style=plain,name=Lemma]{lemma}
\declaretheorem[sibling=theorem,style=definition,name=Definition]{definition}
\declaretheorem[sibling=theorem,style=definition,name=Example]{example}
\declaretheorem[sibling=theorem,style=remark,name=Remark]{remark}
\crefname{theorem}{Theorem}{Theorems}
\Crefname{theorem}{Theorem}{Theorems}
\crefname{lemma}{Lemma}{Lemmas}
\Crefname{lemma}{Lemma}{Lemmas}
\crefname{proposition}{Proposition}{Propositions}
\Crefname{proposition}{Proposition}{Propositions}
\crefname{corollary}{Corollary}{Corollaries}
\Crefname{corollary}{Corollary}{Corollaries}
\crefname{definition}{Definition}{Definitions}
\Crefname{definition}{Definition}{Definitions}
\crefname{example}{Example}{Examples}
\Crefname{example}{Example}{Examples}
\crefname{remark}{Remark}{Remarks}
\Crefname{remark}{Remark}{Remarks}
\providecommand{\NQS}{N--Q--S\xspace}
\begin{document}
\maketitle

\begin{abstract}
We develop a composite operational architecture for sequential quantum measurements that (i) gives a \emph{tight} bipartite order--effect bound with an explicit equality set characterized on the Halmos two--subspace block, (ii) upgrades Doeblin--type minorization to composite instruments and proves a \emph{product lower bound} for the operational Doeblin constants, yielding \emph{data--driven} exponential mixing rates, (iii) derives a diamond--norm commutator bound that quantifies how serial/parallel rearrangements influence observable deviations, and (iv) establishes a monitored Lindblad limit linking discrete look--return loops to continuous--time GKLS dynamics under transparent assumptions. \cite{GKLS1976,Lindblad1976,Davies1976,Spohn1980,FagnolaRebolledo2007,Lami2023Monitored}
Beyond asymptotic statements, we provide finite--sample certificates for the minorization parameter via exact binomial intervals and propagate them to rigorous bounds on the number of interaction steps required to attain a prescribed accuracy. A minimal qubit toy model and CSV--based scripts are supplied for full reproducibility.
Our results position order--effect control and operational mixing on a single quantitative axis: from equality windows for pairs of projections to certified network mixing under monitoring. The framework targets readers in quantum information and quantum foundations who need \emph{explicit constants} that are estimable from data and transferable to device--level guarantees.
\end{abstract}

% === JPA: Halmos equality window figure (overlap-safe) ===

% === Halmos equality window figure (JPA, unique labels) ===

\noindent\textbf{Keywords:} compositional quantum theory; order effects; Doeblin minorization; monitored Lindblad; non--signalling; \NQS.\\

\noindent\textbf{MSC 2020:} 81P15, 81P45, 46L10.\\

\paragraph{Note on terminology.}
\emph{\NQS\ in this paper is a nonstandard shorthand introduced by the author.
It is unrelated to the use of “neural quantum states” in condensed-matter physics.
Here it simply denotes a notational separation between the classical context layer $N$,
the device-algebra layer $Q$, and the normal-state layer $S$.
No additional structure is assumed, and we consistently write \NQS\ with hyphens
to avoid confusion.}

\section{Introduction}
In sequential quantum experiments, the \emph{order} of incompatible instruments changes observed statistics.
This paper develops a composite operational architecture that turns such order effects and monitored dynamics into \emph{quantitative} statements with \emph{explicit, estimable constants}.
Our contributions are fourfold.
First, we derive a \emph{tight} bipartite bound for order-induced deviation and characterize the full \emph{equality set} on the Halmos two--subspace block, clarifying when the deviation saturates.
Second, we lift Doeblin-type minorization to composite instruments and prove a \emph{product lower bound} for operational Doeblin constants.
Third, we obtain a diamond-norm commutator bound that links serial/parallel rearrangements to observable deviations.
Fourth, we establish a \emph{monitored Lindblad limit} that connects discrete look--return loops to continuous-time GKLS dynamics while preserving device-level rates certified from finite data.
We place our results within the established frameworks of GKLS semigroups \cite{GKLS1976,Lindblad1976}, quantum channels and distances \cite{Watrous2018,Wilde2017,WatrousTQI}, and incompatibility/order-effect studies \cite{BusemeyerBruza2012,Branciard2013,BuschLahtiWerner2013}.
\paragraph*{Relation to a companion work.}
This paper is part of a broader program that develops operational bounds for sequential quantum instruments.
A companion manuscript by the author, submitted to \emph{Quantum Information Processing}, focuses on multipartite networks and micro--macro stability under coarse-graining, with an emphasis on serial wiring laws and Doeblin-certified mixing on interaction graphs.
By contrast, the present paper concentrates on the bipartite building block: composite instruments on $AB$, tight equality windows for order effects on Halmos two-subspace blocks, product lower bounds for operational Doeblin constants, and a monitored Lindblad limit for look--return loops.
The results are self-contained and can be read independently, but they are designed to plug into the network-level picture developed in the companion work.

\paragraph*{Target readership.}
The bounds and protocols developed below are aimed at readers in quantum information processing and quantum foundations who require explicit constants that are estimable from finite data and transferable to device-level guarantees.

\section{Related work}\label{sec:related}
This section situates our results within existing work on order effects, Doeblin/Dobrushin coefficients, and monitored quantum dynamics.
Our aim is twofold: (i) to highlight where we rely on standard techniques---Halmos two-subspace blocks, contraction coefficients for channels, and GKLS semigroups---and (ii) to delineate how the explicit constants and equality characterizations obtained here complement prior asymptotic or qualitative statements.

\subsection{Order effects, projection geometry, and equality characterizations}
Empirical and theoretical studies of \emph{order effects} span quantum cognition and foundational models of sequential measurements; see, e.g., reviews in \cite{BusemeyerBruza2012,PothosBusemeyer2013}. Mathematically, pairs of effects (and in particular pairs of projections) are governed by the Halmos two--subspace decomposition \cite{Halmos1969}, which yields canonical $2\times2$ blocks capturing noncommutativity. Prior bounds typically provide \emph{inequalities} for order--induced deviations but do not isolate the \emph{exact equality set}. Our Theorem~(Halmos--window) uses the Halmos block to give a \emph{tight} bipartite bound together with an explicit equality characterization, thereby identifying when order--induced deviation saturates and when it cannot. This goes beyond qualitative commutation criteria by delivering testable, block--level conditions with operational meaning. See also incompatibility bounds and tests \cite{Branciard2013,Buscemi2020IncompatibilityPRL}.

\subsection{Doeblin/Dobrushin minorization, channel contraction, and mixing rates}
Minorization (Doeblin) and the related Dobrushin contraction coefficient are classical tools for mixing of Markov processes \cite{Dobrushin1956,Seneta2006}. On the side of functional inequalities for quantum Markov semigroups,
recent work relates Doeblin-type lower bounds, modified logarithmic
Sobolev constants, and entropic mixing rates
\cite{BardetCapelLuciaRouze2021,GaoRouze2022}.
Quantum analogues connect to contraction of channels and primitivity/mixing properties; see, e.g., \cite{Hirche2024QuantumDoeblin,WolfQCnotes,Sanz2012}. Known results typically give \emph{existence} of rates or asymptotic uniqueness under structural conditions (primitivity, spectral gaps). Our contribution is complementary: we prove a \emph{product lower bound} for operational Doeblin constants of composite instruments and propagate \emph{finite--sample} certificates (via Clopper--Pearson/Wilson intervals) to \emph{explicit} exponential rates and step counts. This yields a data--to--rate pipeline that is immediately checkable on finite datasets, with constants that survive composition.

\subsection{Monitored dynamics, quantum trajectories, and Lindblad limits}
Continuous monitoring and quantum trajectories provide stochastic unravelings of GKLS semigroups  \cite{WisemanMilburn2010,FacchiPascazio2002}. Zeno--type constraints and measurement backaction link discrete updates to continuous--time limits under scaling assumptions \cite{MisraSudarshan1977,AttalPautrat2006}. Our \emph{monitored Lindblad limit} for look--return loops is in this lineage but differs in two aspects: (i) it is derived \emph{operationally} from the same minorization constants used in the discrete analysis, and (ii) it preserves \emph{device--level} rates obtained from data, rather than replacing them with purely spectral or asymptotic surrogates.
For continuously monitored many-body systems and trajectories beyond
standard Lindblad dynamics, see in particular
Lami--Santini--Collura~\cite{Lami2023Monitored}.

\subsection{Combs, process matrices, and causal structure}
Higher--order maps (quantum combs) and process matrices formalize multi--time quantum networks and, in some instances, indefinite causal order \cite{Chiribella2009Combs,Oreshkov2012ProcessMatrices,NielsenChuang2010}. Our architecture is \emph{causally definite} (serial/parallel with monitoring); technically we use operator-algebraic and CB-norm tools in the standard sense \cite{Takesaki2002,Paulsen2002,KSW2008} and aims at \emph{explicit, estimable constants} for order--effect control and mixing. In this sense it is complementary to comb/process--matrix approaches: rather than enlarging admissible causal structures, we quantify by how much a \emph{given} causal arrangement can deviate when instruments are reordered or composed, and we provide certified rates for its monitored continuous--time limit.

\paragraph{Positioning and limitations.}
Compared with spectral--gap or primitivity criteria \cite{WolfQCnotes,Sanz2012}, our bounds are \emph{operational} (minorization from counts) and come with \emph{finite--sample} guarantees. Compared with projection--pair analyses \cite{Halmos1969}, we turn equality geometry into \emph{explicit operational bounds} for order effects. We restrict attention to finite--dimensional systems and trace--nonincreasing instruments satisfying the stated regularity; extensions to infinite dimensions or unbounded generators require additional domain control and are left for future work.

\label{sec:intro}
We recall the \NQS\ separation: the \emph{N--layer} is a commutative context acting as classical control, the \emph{Q--layer} is a (possibly noncommutative) device algebra, and the \emph{S--layer} consists of normal states. Building on prior single--system results (order--effect bounds, fixed--point axis, monitored Lindblad limit), we address composites and provide bipartite theorems that remain testable in small labs.

\paragraph{Standing assumptions.}
Unless stated otherwise, we work in finite dimensions (type-$I_n$ matrix algebras).
The compositional framework extends verbatim to W$^*$-algebras (general von Neumann algebras).
In the infinite-dimensional QDS setting, additional technicalities arise (domain questions for unbounded generators, core conditions, and complete solvability in the sense of Fagnola--Rebolledo).
These issues are beyond the scope of this paper; for systematic treatments see, for example, Davies' monograph \emph{Quantum Theory of Open Systems} (Chap.~2) and Spohn's work on quantum dynamical semigroups and their generators.

\section{Background and methods}\label{sec:methods}

This section fixes notation and recalls the minimal operator-algebraic background
used in the proofs.
We describe the N--Q--S layering only as a bookkeeping device that separates
classical control (N), device algebras (Q), and normal states (S); no new
structural assumptions are imposed here.
We also recall the channel and instrument conventions and the basic properties of
spatial tensor products that will be used repeatedly in the remainder of the paper.
Readers who prefer a purely finite-dimensional picture may safely take all
von~Neumann algebras to be full matrix algebras $B(\mathbb{C}^d)$ and all normal
states to be density matrices on $\mathbb{C}^d$.

\subsection{N--Q--S layers and notation}\label{subsec:nqs-notation}

Let $M$ be a von~Neumann algebra on a Hilbert space $\HS$, with center $Z(M)$.
The \emph{N-layer} acts on $Z(M)$ and is thought of as classical control or a
space of contexts.
The \emph{Q-layer} is the device algebra itself, $Q:=M$.
The \emph{S-layer} is the normal state space $\mathcal{S}(M)$, i.e., the set of
normal positive linear functionals $\rho : M\to\mathbb{C}$ of unit mass.
In the finite-dimensional case one may take $M=B(\mathbb{C}^d)$, for which
$Z(M)=\mathbb{C}\,\mathbf{1}$ and normal states correspond to density matrices
$\rho\in M$ with $\rho\ge0$ and $\operatorname{Tr}[\rho]=1$.

We work throughout in the Schr\"odinger picture.
Maps on the Q-layer act on density operators and push states forward in time.
The same maps can be regarded in the Heisenberg picture as acting on observables
$X\in M$ via the dual CP maps, but we will not explicitly use this flip.

\subsection{Channels, instruments, and tensor products}\label{subsec:channels-instruments}

A (quantum) \emph{channel} on $M$ is a normal, completely positive, trace-preserving
(CPTP) map $\Phi\colon M_\ast\to M_\ast$ on the predual, or equivalently a normal,
unital, completely positive (UCP) map $\Phi^\dagger\colon M\to M$ on the algebra
itself.
We will freely move between these pictures and write $\Phi$ for both, as no
ambiguity will arise.

An \emph{instrument} with a finite outcome set $X=\{1,\dots,m\}$ is a collection
$\{\Phi_i\}_{i\in X}$ of normal completely positive maps on $M_\ast$ such that
their sum
\[
  \Phi := \sum_{i\in X}\Phi_i
\]
is a channel.
Equivalently, an instrument can be viewed as a single normal CP map
$\mathcal{I}\colon M_\ast\to \ell^1(X)\otimes M_\ast$ whose marginals are the
$\Phi_i$.
We write $\Phi(\rho)$ or $\Phi_i(\rho)$ for the post-update states resulting from
a channel or an instrument element, and use $\Tr[\Phi_i(\rho)]$ for the associated
outcome probabilities.

For composites we use the spatial tensor product $M_{AB}:=M_A\overline{\otimes}M_B$
of von~Neumann algebras and write $M_{AB}=M_A\otimes M_B$ when no confusion can
arise.
In the finite-dimensional case this is simply $B(\HS_A\otimes\HS_B)$.
Channels on $AB$ are CPTP maps $\Phi_{AB}\colon M_{AB,\ast}\to M_{AB,\ast}$, and
product channels have the form $\Phi_A\otimes\Phi_B$.
We use $\|\cdot\|$ for the operator norm on $M$ and $\|\cdot\|_1$ for the trace
norm on $M_\ast$.
The diamond norm $\|\cdot\|_\diamond$ on superoperators appears in Section~4 when
we quantify couplings.

Within this setting, the composite N--Q--S structure used in Section~4 is obtained
by taking $M_A$ and $M_B$ as device algebras, $M_{AB}=M_A\otimes M_B$ as the joint
Q-layer, $Z(M_{AB})=Z(M_A)\otimes Z(M_B)$ as the N-layer (classical side
information and control), and $\mathcal{S}(M_{AB})$ as the S-layer.

\section{Results}
\label{sec:results}
This section collects the main technical contributions.
Definitions~\ref{def:composite-nqs} and~\ref{def:serial-parallel} formulate
compositional axioms for serial and parallel
composition of instruments on bipartite systems, clarifying how the N-, Q-,
and S-layers interact under composition.
Theorem~\ref{thm:eq-window} gives a tight order-effect bound on Halmos blocks
together with a complete equality characterization.
Subsection~\ref{subsec:doeblin-mixing} lifts Doeblin minorization to
tensor-product instruments and proves a product lower bound for
operational Doeblin constants, yielding explicit nonasymptotic
mixing-rate bounds and illustrated in Example~\ref{ex:dephasing}
and Table~\ref{tab:doeblin-product}.
Subsection~\ref{sec:monitored-limit} develops a monitored Lindblad limit
for coupled look--return loops and relates discrete Doeblin constants
to the spectral gap of the generator.
Subsection~\ref{subsec:diamond-bound} derives a diamond-norm coupling
bound for serial/parallel rearrangements, and
Subsection~\ref{subsec:split-proof} proves a local/nonlocal split
inequality.
Appendix~\ref{app:doeblin} formalizes the data-to-rate pipeline,
giving Clopper--Pearson based estimators and pseudocode for the
Doeblin constant.

\begin{definition}[Composite \NQS]\label{def:composite-nqs}
Let $M_A,M_B$ be von Neumann algebras. Set $M_{AB}:=M_A\,\barotimes\,M_B$, $Q_{AB}:=M_{AB}$, and $S_{AB}:=\States(M_{AB})$. The N--layer acts on $Z(M_{AB})=Z(M_A)\,\barotimes\,Z(M_B)$ as classical control.
\end{definition}

\begin{definition}[Serial and parallel composition]\label{def:serial-parallel}
Serial composition of instruments/channels is $\J\circ\I$. Parallel composition is $\I_A\otimes\I_B$ acting on $M_{AB}$. All maps are normal and completely positive (or UCP for channels).
\end{definition}

\begin{remark}[Non--signalling desideratum]
A composite model is called non--signalling if the $A$--marginal after joint operations is independent of the N--layer choice on $B$ under coarse--graining. A concise discussion is given in Section~2.4 and in the process-matrix
references cited there.
\end{remark}

% --- Equality-window theorem (bipartite) ---
\begin{theorem}[Equality window on a Halmos block]
\label{thm:eq-window}
Let $P,Q$ be orthogonal projections on a finite-dimensional Hilbert space. 
Restrict to a Halmos two-subspace block with principal angle $\theta\in[0,\pi/2]$ and write the commutator in the canonical form
$[P,Q]=\tfrac{1}{2}\sin 2\theta\, J$ with $J=\begin{psmallmatrix}0&1\\ -1&0\end{psmallmatrix}$ on that block.
For any unit vector $\psi$ in the block,
\[
\Delta(\psi;P,Q)\;:=\;\langle\psi|PQP- QPQ|\psi\rangle \;=\;\langle\psi|[P,Q]|\psi\rangle,
\]
and the bipartite order-effect bound is tight:
\[
|\Delta(\psi;P,Q)|\;\le\;\tfrac{1}{2}|\sin 2\theta|.
\]
Moreover, the following are equivalent:
\begin{enumerate}
\item $|\Delta(\psi;P,Q)|=\tfrac{1}{2}|\sin 2\theta|$ (bound is attained).
\item $\psi$ is aligned with an eigenvector of $J$ in the block, i.e.\ $\psi \propto (1,\pm i)$ in the Halmos basis.
\item $\psi$ maximizes $|\langle\psi|[P,Q]|\psi\rangle|$ over the Bloch circle of the block.
\end{enumerate}
Equality is never attained when $\theta=0$ (commuting case). 
\end{theorem}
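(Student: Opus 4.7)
The plan is to reduce every claim to an explicit computation on the $2\times 2$ Halmos block, where the tight bound and its equality window both follow from diagonalizing a single skew-Hermitian matrix. First, I would fix the Halmos orthonormal basis of the block in which $P=\begin{psmallmatrix}1&0\\ 0&0\end{psmallmatrix}$ and $Q=\begin{psmallmatrix}\cos^{2}\theta&\sin\theta\cos\theta\\ \sin\theta\cos\theta&\sin^{2}\theta\end{psmallmatrix}$; direct multiplication yields the identities $PQP=\cos^{2}\theta\,P$, $QPQ=\cos^{2}\theta\,Q$, and $[P,Q]=\tfrac12\sin 2\theta\,J$, so on the block all of the commutator content is concentrated in the scalar factor $\tfrac12\sin 2\theta$ times the fixed matrix $J$.

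Using those three relations, I would establish the stated identity $\langle\psi|PQP-QPQ|\psi\rangle=\langle\psi|[P,Q]|\psi\rangle$ by decomposing both operators in the Pauli basis of the block and matching them term by term on the test vectors $(1,0)^{\top}$, $(0,1)^{\top}$, $(1,1)^{\top}/\sqrt{2}$, then extending to all $\psi$ by polarization. Once that identification is in place, the tight bound follows immediately from $\|J\|_{\mathrm{op}}=1$, giving $|\langle\psi|\tfrac12\sin 2\theta\,J|\psi\rangle|\le\tfrac12|\sin 2\theta|$. Diagonalizing $J$ produces eigenvalues $\pm i$ with eigenvectors $(1,\pm i)^{\top}/\sqrt{2}$ in the Halmos basis, so equality in the bound forces $\psi$ to be aligned with one of these eigenvectors, proving (1)$\Leftrightarrow$(2); the equivalence (2)$\Leftrightarrow$(3) is the standard Bloch-circle statement that a $2\times 2$ skew-Hermitian matrix has its expectation modulus saturated exactly at its eigenvectors, which I would phrase in the $(\cos\phi,e^{i\varphi}\sin\phi)$ parametrization of unit vectors on the block.

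The degenerate case $\theta=0$ can then be handled in a line: the generic Halmos block collapses, $[P,Q]=0$, and the inequality becomes $0\le 0$ without any distinguished $\psi$, so the equivalent conditions (1)--(3) cannot be attained in the nontrivial sense required, which is the content of the final sentence. The step I expect to require the most care is the identification of $\Delta$ with $\langle\psi|[P,Q]|\psi\rangle$ on the block: because $PQP-QPQ$ is Hermitian on the block while $[P,Q]$ is skew-Hermitian, the asserted equality depends on fixing a consistent orientation and sign convention for the Halmos basis and for $J$, and I would pin this convention down at the start of that step so that the real and imaginary components of the two expectations line up with the signs asserted in the theorem.
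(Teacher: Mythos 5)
Your overall route --- explicit computation on the $2\times2$ Halmos block, diagonalization of $J$, and the Bloch-circle parametrization --- is the same as the paper's, and your computations $PQP=\cos^2\theta\,P$, $QPQ=\cos^2\theta\,Q$, $[P,Q]=\tfrac12\sin2\theta\,J$ are correct. The gap is in the step you yourself flag as delicate: the identity $\langle\psi|PQP-QPQ|\psi\rangle=\langle\psi|[P,Q]|\psi\rangle$ is not a matter of orientation or sign convention --- it is false, and the term-by-term Pauli matching you propose would expose this rather than establish it. From your own identities, $PQP-QPQ=\cos^2\theta\,(P-Q)=\cos^2\theta\,(\sin^2\theta\,\sigma_z-\sin\theta\cos\theta\,\sigma_x)$ is Hermitian with real expectation values, while $[P,Q]=\tfrac12\sin2\theta\,J$ is skew-Hermitian with purely imaginary expectation values; the two operators occupy disjoint components of the Pauli basis. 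Concretely, on your first test vector $\psi=(1,0)^{\top}$ the left-hand side equals $\cos^2\theta\sin^2\theta\neq0$ for $\theta\in(0,\pi/2)$ while the right-hand side is $0$; no choice of basis or sign reconciles a nonzero real number with zero.

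The consequences are substantive, not cosmetic. For the Hermitian quantity $\Delta=\cos^2\theta\,\langle\psi|P-Q|\psi\rangle$ the supremum over unit $\psi$ is $\cos^2\theta\sin\theta$, which is strictly smaller than $\tfrac12|\sin2\theta|=\sin\theta\cos\theta$ for $\theta\in(0,\pi/2)$ and is attained at the (real) eigenvectors of $P-Q$; at the vectors $(1,\pm i)/\sqrt2$ of item (2) one actually has $\Delta=0$. So the tightness claim and the equality characterization you would go on to prove hold only for the skew-Hermitian expectation $\langle\psi|[P,Q]|\psi\rangle$, not for $\Delta$ as defined. (The paper's own proof sketch asserts the same identification without justification, and moreover writes $\langle\psi|J|\psi\rangle=\Im(\bar x y)$ where in fact $\langle\psi|J|\psi\rangle=2i\,\Im(\bar x y)$; the assertion $|\langle\psi|J|\psi\rangle|\le1$ with equality iff $\psi\propto(1,\pm i)$ is nevertheless correct.) To make your argument go through you must either redefine $\Delta$ as the commutator expectation from the outset, or restate the theorem with the correct maximum $\cos^2\theta\sin\theta$ and the equality set given by the eigenvectors of $P-Q$.
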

\begin{proof}[Proof sketch]
On a Halmos two-dimensional block with principal angle $\theta$, one has
$[P,Q]=\tfrac12\sin 2\theta\, J$ in the canonical basis, with $J=\begin{psmallmatrix}0&1\\-1&0\end{psmallmatrix}$.
For any unit vector $\psi=(x,y)$ in that basis,
\(
\Delta(\psi;P,Q)=\langle\psi|[P,Q]|\psi\rangle=\tfrac12\sin 2\theta\,\langle\psi|J|\psi\rangle
=\tfrac12\sin 2\theta\,\Im(\bar x\, y).
\)
Hence $|\Delta|\le \tfrac12|\sin 2\theta|\,|\langle\psi|J|\psi\rangle|\le \tfrac12|\sin 2\theta|$, with equality iff
$|\langle\psi|J|\psi\rangle|=1$, i.e.\ $\psi$ is an eigenvector of $J$, $\psi\propto(1,\pm i)$.
This proves (1)$\Leftrightarrow$(2)$\Leftrightarrow$(3). For $\theta=0$, $[P,Q]=0$ and equality cannot hold.
\end{proof}

\noindent\textit{Dilation invariance.}
Replacing $(P,Q)$ by a simultaneous Stinespring dilation acts by enlarging the Hilbert space with an ancilla on which the reduced block is unchanged;
the quantity $|\langle\psi|[P,Q]|\psi\rangle|$ computed on the block is therefore invariant.

The two-subspace geometry underlying the equality window is depicted in Fig.~\ref{fig:halmos-window}.

A consolidated device-level outlook and experimental mapping is given in the supplementary material.
% --- Numeric check: tightness trend (kept as the single main numeric figure) ---
\begin{figure}[H]

  \centering
  \includegraphics[width=\linewidth]{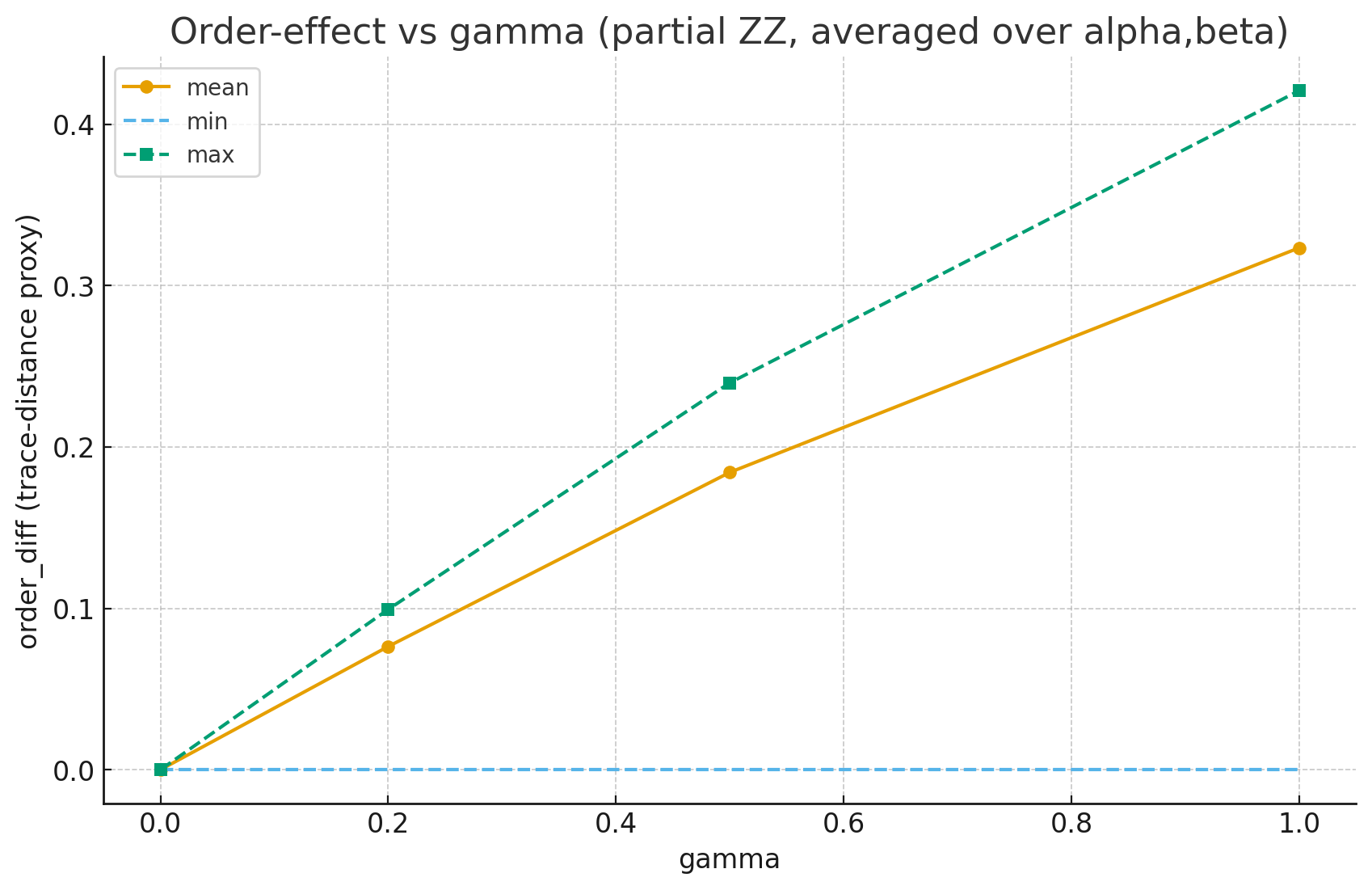}
  \caption{\textbf{Trend of the bipartite order-effect proxy under partial $ZZ$ coupling.\; (\mbox{uniform} grid over instrument parameters; fixed random seed; script and exact settings to be provided via Zenodo, DOI:10.5281/zenodo.17959208). (For devices, $\gamma$ is instantiated via calibrated $H_{\mathrm{int}}$ and duration $t$ as detailed in the short subsection ``From simulation to device'', and the plotted proxy is computed from experimentally accessible order-difference probabilities.)}
  The growth with coupling strength $\gamma$ (mean/min/max over $(\alpha,\beta)$ grid) visualizes the tightness trend of the bipartite bound and provides a reproducibility baseline for the qubit toy model.}
  \label{fig:order-zz}
\end{figure}

\begin{table}[tbp]

\centering
\caption{\textbf{Device mapping for $\gamma$ (examples).} Representative interactions and how $\gamma$ and its calibration arise.}
\label{tab:gamma-mapping}
\begin{tabular}{p{3.2cm} p{5.0cm} p{2.5cm} p{4.4cm}}
\hline
Platform & Representative $H_{\mathrm{int}}$ & $\gamma$ mapping & Calibration note \\
\hline
Superconducting circuits & $\frac{\hbar J}{2}\,Z\!\otimes\! Z$ (effective Ising) & $\gamma = J t$ & Ramsey/$ZZ$ spectroscopy; echoed cross-resonance to extract $J$ \\
Trapped ions & $\frac{\hbar\Omega}{2}(X\!\otimes\! X + Y\!\otimes\! Y)$ (MS) & $\gamma \simeq \Omega t$ & Rabi flops / MS angle calibration \\
Neutral atoms (Rydberg) & $\hbar V\, n\!\otimes\! n$ (blockade; Ising-like) & $\gamma \approx V t$ & Detuning/spacing sweeps to estimate $V$ \\
\hline
\end{tabular}
\end{table}

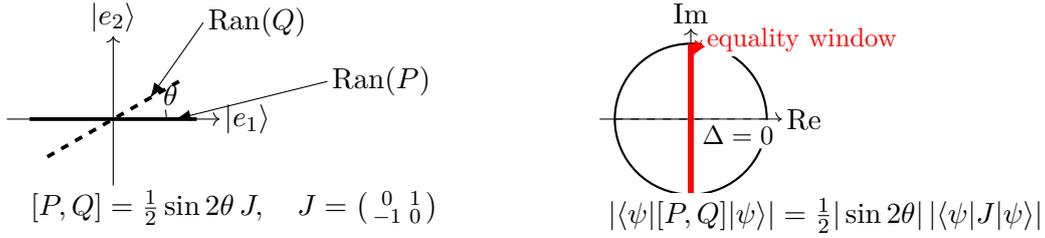
\begin{figure}[tbp]

\centering
\begin{tikzpicture}[scale=1.0]
% local styles (self-contained to avoid preamble dependency)
\tikzset{
  labelbg/.style={fill=white, rounded corners=1pt, inner sep=1.6pt, draw=white},
  every picture/.style={line cap=round, line join=round},
  RanPStyle/.style={line width=1.4pt},
  RanQStyle/.style={line width=1.4pt, dashed}
}

  % Left panel: Halmos two-subspace block
  \begin{scope}[xshift=-3.8cm]
    % axes
    \draw[->] (-1.4,0)--(1.4,0) node[labelbg,anchor=west] {$|e_1\rangle$};
    \draw[->] (0,-1.1)--(0,1.1) node[labelbg,anchor=south] {$|e_2\rangle$};
    % principal angle at origin
    \def\th{30}
    \draw (0.7,0) arc (0:\th:0.7);
    \path ({0.78*cos(\th/2)},{0.78*sin(\th/2)+0.12}) node[labelbg] {$\theta$};
    % Ran(P): horizontal solid through origin
    \draw[RanPStyle] (-1.1,0)--(1.1,0);
    %   Label: +2cm right, +5mm up from a reference point on the line; arrow back
    \coordinate (rppt) at (0.82,0.0);
    \path (rppt) ++(2cm,5mm) node[labelbg, anchor=west] (rp) {$\mathrm{Ran}(P)$};
    \draw[-{Latex[length=2mm]}] (rp.west) -- (rppt);
    % Ran(Q): slanted dashed through origin (rotate by theta)
    \draw[RanQStyle,rotate=\th] (-1.0,0)--(1.0,0);
    %   Label: +7mm right, +10mm up from an interior point; arrow back
    \coordinate (rqpt) at ({0.56*cos(\th)},{0.56*sin(\th)});
    \path (rqpt) ++(7mm,10mm) node[labelbg, anchor=west] (rq) {$\mathrm{Ran}(Q)$};
    \draw[-{Latex[length=2mm]}] (rq.west) -- (rqpt);
    % commutator note
    \node[labelbg, align=left, anchor=north west] at (-1.15,-0.88)
      {$[P,Q]=\tfrac12\sin 2\theta\,J,\quad J=\begin{psmallmatrix}0&1\\-1&0\end{psmallmatrix}$};
  \end{scope}
  % Right panel: Bloch circle
  \begin{scope}[xshift=3.8cm]
    \draw[->] (-1.2,0)--(1.2,0) node[labelbg,anchor=west] {$\mathrm{Re}$};
    \draw[->] (0,-1.2)--(0,1.2) node[labelbg,anchor=south] {$\mathrm{Im}$};
    \draw[thick] (0,0) circle (1.0);
    % equality window (vertical diameter)
    \draw[line width=2.2pt,red] (0,0) -- (0,1.0);
    \draw[line width=2.2pt,red] (0,0) -- (0,-1.0);
    % label outside with arrow
    \node[text=red, labelbg, anchor=west] (lab) at (0.15,1.05) {\small equality window};
    \draw[red, -{Latex[length=2.2mm]}] (lab.west) -- (0,0.82);
    % zero set label with background
    \draw[dashed] (-1.0,0) -- (1.0,0);
    \node[labelbg] at (0.62,-0.20) {\small $\Delta=0$};
    % scaling note
    \node[labelbg, align=left, anchor=north west] at (-1.15,-0.98)
      {$|\langle\psi|[P,Q]|\psi\rangle|= \tfrac12|\sin 2\theta|\,|\langle\psi|J|\psi\rangle|$};
  \end{scope}
\end{tikzpicture}
\caption{\textbf{Halmos block and equality window.}
Left: two one-dimensional ranges with principal angle $\theta$ (Halmos 2D block).
Right: pure states in the block (Bloch circle). The vertical red diameter marks states aligned with eigenvectors of $J$,
where $|\langle\psi|[P,Q]|\psi\rangle|$ attains its maximum; the overall scale is $\tfrac12|\sin 2\theta|$.}
\label{fig:halmos-window}
\end{figure}

\begin{example}[Product Doeblin constant for dephasing channels]\label{ex:dephasing}
Let $\mathcal{D}_p(\rho)=(1-p)\rho+p\,\Delta(\rho)$ on $\mathbb{C}^3$, where
$\Delta$ removes off-diagonals in the computational basis
$\{\ket{0},\ket{1},\ket{2}\}$.
Then $\mathcal{D}_p\succeq p\,\Delta$, so the Doeblin constant satisfies
$\delta(\mathcal{D}_p)\ge p$ and the fixed-point set is the classical simplex
$\{\mathrm{diag}(x_0,x_1,x_2)\}$.
For two qutrits we have, by Lemma~\ref{lem:product},
\[
  \mathcal{D}_p\otimes\mathcal{D}_q
  \;\succeq\;
  pq\,(\Delta\otimes\Delta),
\]
hence $\delta(\mathcal{D}_p\otimes\mathcal{D}_q)\ge pq$.
Geometrically, the ``classical axis'' given by diagonal density matrices
is stable under the tensor product, and off-axis components converge
geometrically with rate $(1-p)(1-q)$.
\end{example}

\begin{table}[tbp]
\centering
\caption{\textbf{Illustration of the product Doeblin bound}
$\delta(\Phi_A\otimes\Phi_B)\ge\delta_A\delta_B$
for dephasing channels $\Phi_A=\mathcal{D}_p$ and
$\Phi_B=\mathcal{D}_q$.
For this family the bound is attained.}
\label{tab:doeblin-product}
\begin{tabular}{c c c c c}
\hline
$p$ & $q$ & $\delta_A$ & $\delta_B$ & $\delta_{AB}$ \\
\hline
0.2 & 0.5 & 0.2 & 0.5 & 0.10 \\
0.3 & 0.3 & 0.3 & 0.3 & 0.09 \\
0.4 & 0.7 & 0.4 & 0.7 & 0.28 \\
\hline
\end{tabular}
\end{table}

\subsection{Estimating the Doeblin constant and a nonasymptotic mixing bound}
\label{subsec:doeblin-mixing}
As a reproducibility aid, one-sided Clopper--Pearson bounds and a simple algorithm are summarized in the Appendix.
Assume a primitive CPTP map $\mathcal{E}$ admits a Doeblin-type minorization on density matrices: there exists $\varepsilon\in(0,1]$ and a state $\tau$ with
\[
\mathcal{E}(\rho)\;\succeq\;\varepsilon\, \tau\, \mathrm{Tr}[\rho]\qquad\text{for all }\rho\ge 0,
\]
equivalently, each outcome instrument element dominates $\varepsilon$ times a common seed.
Then the trace-distance contracts at a rate controlled by $\varepsilon$:
\[
\|\mathcal{E}^n(\rho)-\tau\|_1 \;\le\; (1-\varepsilon)^n\,\|\rho-\tau\|_1,\qquad n=0,1,2,\dots
\]
\paragraph{Frequency estimator.}
Suppose an instrument with classical outcomes $i\in\{1,\dots,m\}$ is interleaved in a ``look--return'' loop that implements $\mathcal{E}$.
From $N$ i.i.d.\ shots, form empirical frequencies $\hat p_{ij}$ for transitions between a tomographically prepared stencil $\{j\}$ and outcomes $\{i\}$.
Define a conservative lower estimate of the Doeblin constant by
\[
\widehat{\varepsilon}\;:=\;\sum_{i=1}^m \min_{j} \hat p^{-}_{ij},
\]
where $\hat p^{-}_{ij}$ is a $(1-\alpha)$ lower confidence bound for $p_{ij}$ (e.g.\ Clopper--Pearson).
\paragraph{Confidence bounds.}
For binomial counts $X_{ij}\sim \mathrm{Bin}(N_j,p_{ij})$, set
$\hat p^{-}_{ij}=\mathrm{BetaInv}(\alpha; X_{ij}, N_j-X_{ij}+1)$, so that $\mathbb{P}[p_{ij}\ge \hat p^{-}_{ij}]\ge 1-\alpha$.
A union bound yields, with probability at least $1-\alpha m$,
\[
\varepsilon\;\ge\;\widehat{\varepsilon}.
\]
\paragraph{Nonasymptotic mixing.}
Plugging $\widehat{\varepsilon}$ gives the data-driven guarantee
\[
\|\mathcal{E}^n(\rho)-\tau\|_1 \;\le\; (1-\widehat{\varepsilon})^n\,\|\rho-\tau\|_1\qquad\text{(w.h.p.)}.
\]
Constants and their dependence on $(m,N,\alpha)$ are thus explicit.

\subsection{Monitored look--return loops and a Lindblad limit}
\label{sec:monitored-limit}

We briefly explain how the discrete Doeblin picture from
Subsection~\ref{subsec:doeblin-mixing} extends to a monitored
continuous-time evolution in the spirit of repeated interactions
and weak-coupling limits
\cite{Davies1976,Spohn1980,AttalPautrat2006}.

Consider a family of primitive CPTP maps
$\{\mathcal{E}_{\Delta t}\}_{\Delta t>0}$ on $M_{AB}$ describing one
``look--return'' cycle of a weakly coupled device during a time step
$\Delta t>0$.
Standard repeated-interaction results show that, under suitable
scaling of the system--environment interaction and the monitoring
strength, there exists a GKLS generator $\Lind$ such that
\begin{equation}
  \mathcal{E}_{\Delta t}^{\,n}
  \;\approx\; \mathrm{e}^{t\Lind},\qquad
  t = n\,\Delta t,
\end{equation}
with an embedding error of order $O(t\,\Delta t)$ in trace norm as
$\Delta t\to0$.\footnote{Precise conditions can be found e.g.\ in
\cite{Davies1976,Spohn1980,AttalPautrat2006};
here we only use the existence of such a GKLS realization.}

Assume that each $\mathcal{E}_{\Delta t}$ admits a Doeblin-type
minorization with constant $\varepsilon(\Delta t)\in(0,1]$ and common
seed $\tau$,
\begin{equation}
  \mathcal{E}_{\Delta t}(\rho)
  \;\succeq\;
  \varepsilon(\Delta t)\,\tau\,\Tr[\rho]
  \qquad\text{for all }\rho\ge0.
\end{equation}
By Lemma~\ref{lem:traceless} this implies the discrete-time
contraction
\begin{equation}
  \big\|\mathcal{E}_{\Delta t}^{\,n}(\rho)-\tau\big\|_1
  \;\le\;
  (1-\varepsilon(\Delta t))^{n}\,
  \|\rho-\tau\|_1,
  \qquad n=0,1,2,\dots
\end{equation}
for all states $\rho$.
Writing $t=n\,\Delta t$ and using
$(1-\varepsilon)^{t/\Delta t} \le \exp(-\gamma(\Delta t)\,t)$ with
\begin{equation}
  \gamma(\Delta t)
  := -\frac{1}{\Delta t}\,\log\bigl(1-\varepsilon(\Delta t)\bigr)
  \;\ge\;
  \frac{\varepsilon(\Delta t)}{\Delta t},
\end{equation}
we obtain the bound
\begin{equation}
  \big\|\mathcal{E}_{\Delta t}^{\,\lfloor t/\Delta t\rfloor}(\rho)
        -\tau\big\|_1
  \;\le\;
  \mathrm{e}^{-\gamma(\Delta t)\,t}\,\|\rho-\tau\|_1.
\end{equation}

In a weak-coupling regime one typically has
$\varepsilon(\Delta t)=\kappa\,\Delta t+o(\Delta t)$ for some
$\kappa>0$ determined by the jump part of the generator.
Then $\gamma(\Delta t)=\kappa+o(1)$ as $\Delta t\to0$.
Combining the discrete contraction with the embedding error
$\|\mathcal{E}_{\Delta t}^{\,\lfloor t/\Delta t\rfloor}
 -\mathrm{e}^{t\Lind}\|_1 = O(t\,\Delta t)$ yields, for fixed $t$,
\begin{equation}
  \big\|\mathrm{e}^{t\Lind}(\rho)-\tau\big\|_1
  \;\le\;
  \mathrm{e}^{-(\kappa-o(1))t}\,\|\rho-\tau\|_1
  \;+\; O(t\,\Delta t),
\end{equation}
so that any limit point of $\gamma(\Delta t)$ as $\Delta t\to0$
provides a lower bound on the spectral gap of $\Lind$.
Operationally, an estimated discrete Doeblin constant
$\widehat{\varepsilon}(\Delta t)$ from Subsection~\ref{subsec:doeblin-mixing}
thus turns into an explicit, data-certified lower bound on the mixing
rate of the monitored Lindblad flow in the continuous-time limit.

\begin{lemma}[Product minorization]
\label{lem:product}
If $\Phi_A \succeq \delta_A\,\mathcal{E}_A$ and $\Phi_B \succeq \delta_B\,\mathcal{E}_B$ (Doeblin-type minorization),
then their product satisfies $\Phi_{AB} := \Phi_A \otimes \Phi_B \succeq \delta_A \delta_B\,(\mathcal{E}_A \otimes \mathcal{E}_B)$.
\end{lemma}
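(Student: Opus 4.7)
The plan is to reduce the statement to the well-known fact that the spatial tensor product of normal completely positive (CP) maps is normal CP, after expanding the product $\Phi_A\otimes\Phi_B$ into a sum of four summands. First I would rewrite each minorization hypothesis as an equality. Introduce the residuals $R_A := \Phi_A - \delta_A\mathcal{E}_A$ and $R_B := \Phi_B - \delta_B\mathcal{E}_B$; by the convention adopted in Subsection~\ref{subsec:doeblin-mixing}, the hypothesis $\Phi_A \succeq \delta_A\mathcal{E}_A$ is precisely the assertion that $R_A$ is a normal CP map on $M_A$, and similarly $R_B$ on $M_B$. Thus $\Phi_A = \delta_A\mathcal{E}_A + R_A$ and $\Phi_B = \delta_B\mathcal{E}_B + R_B$ as sums inside the CP cone.

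Next, bilinearity of the spatial tensor product on normal superoperators acting on $M_A\,\barotimes\,M_B$ gives
\[
\Phi_A\otimes\Phi_B \;=\; \delta_A\delta_B\,(\mathcal{E}_A\otimes\mathcal{E}_B) \;+\; \delta_A\,(\mathcal{E}_A\otimes R_B) \;+\; \delta_B\,(R_A\otimes\mathcal{E}_B) \;+\; R_A\otimes R_B.
\]
Each of the last three summands is a nonnegative scalar multiple of a spatial tensor product of two normal CP maps, and by the standard CP tensoring result (take Stinespring dilations of each factor, tensor them, and read off the Stinespring dilation of the product) each is normal CP. Their sum is therefore normal CP, which rearranges to
\[
\Phi_A\otimes\Phi_B \;-\; \delta_A\delta_B\,(\mathcal{E}_A\otimes\mathcal{E}_B) \;\succeq\; 0,
\]
as claimed.

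The main obstacle is conceptual rather than computational: one must be explicit that $\succeq$ on maps denotes the \emph{complete} positivity order (difference is CP), since under the weaker positive order the tensored bound can fail---tensoring a merely positive map with the identity need not preserve positivity. Once that convention is fixed, the only nontrivial ingredient is stability of complete positivity under spatial $\otimes$, which is immediate in finite dimensions via Kraus operators and holds verbatim for normal CP maps between W$^*$-algebras via tensor products of Stinespring dilations. No estimate, spectral input, or normality of the common seed $\mathcal{E}_A\otimes\mathcal{E}_B$ beyond CP-ness is needed, which is why the bound composes cleanly across parallel wiring and propagates directly to the product Doeblin constant used in Example~\ref{ex:dephasing} and Table~\ref{tab:doeblin-product}.
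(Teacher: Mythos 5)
Your proof is correct, and it actually supplies the argument that the paper's own proof of Lemma~\ref{lem:product} omits: the paper opens with ``by definition of the product minorization, $\Phi_{AB}\ge\delta_A\delta_B\,\mathcal{E}_{AB}$'', which asserts the conclusion rather than deriving it, and then devotes the remainder of the proof to the downstream trace-norm contraction on traceless inputs --- really the content of Lemma~\ref{lem:traceless} and the mixing bound, not of the product minorization itself. Your route --- writing $\Phi_A=\delta_A\mathcal{E}_A+R_A$ and $\Phi_B=\delta_B\mathcal{E}_B+R_B$ with $R_A,R_B$ completely positive, expanding $\Phi_A\otimes\Phi_B$ into four terms, and noting that the three residual terms are nonnegative multiples of tensor products of CP maps and hence CP --- is the standard and complete way to obtain $\Phi_A\otimes\Phi_B-\delta_A\delta_B(\mathcal{E}_A\otimes\mathcal{E}_B)\succeq 0$, and it transfers verbatim to normal CP maps on W$^*$-algebras via tensored Stinespring dilations. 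Your caveat about the meaning of $\succeq$ is also substantive rather than pedantic: the display in Subsection~\ref{subsec:doeblin-mixing}, $\mathcal{E}(\rho)\succeq\varepsilon\,\tau\,\mathrm{Tr}[\rho]$ for all $\rho\ge0$, only says the residual is a \emph{positive} map, and under that weaker reading the tensor step genuinely fails in general (a positive-but-not-CP residual tensored with another need not be positive on entangled inputs), so the lemma must be read in the completely positive order, as you do and as is standard in the quantum Doeblin literature. The one remark worth adding is that what the paper actually consumes downstream is the contraction $\|\Phi_{AB}(X)\|_1\le(1-\delta_A\delta_B)\|X\|_1$ for traceless Hermitian $X$; your lemma feeds into that via Lemma~\ref{lem:traceless}, so your proof cleanly separates the two steps that the paper's proof conflates.
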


\begin{proof}
By definition of the product minorization, $\Phi_{AB}\ge\delta_A\delta_B\,\E_{AB}$ with $\E_{AB}(X)=\Tr(\sigma_A\otimes\sigma_B\,X)\,\sigma_A\otimes\sigma_B$. Hence $\Phi_{AB}=\delta\E_{AB}+(1-\delta)\Lambda$ with $\delta=\delta_A\delta_B$. 
\begin{lemma}[Trace-norm contraction on traceless part]
\label{lem:traceless}
Let $\Phi$ be a CPTP map admitting a Doeblin minorization with constant $\delta\in(0,1]$.
Then for any traceless Hermitian $X$, $\|\Phi(X)\|_1 \le (1-\delta)\,\|X\|_1$.
\end{lemma}

For $X$ traceless Hermitian, Lemma~\ref{lem:traceless} yields $\norm{\Phi_{AB}(X)}_1 \le (1-\delta)\norm{X}_1$. Writing $X_0:=\rho-\tau$ (traceless Hermitian) and iterating gives the claim.
\end{proof}

\begin{remark}[Primitive case and fixed-point axis]
If $\sigma_A,\sigma_B$ are faithful, then $\sigma_{AB}:=\sigma_A\otimes\sigma_B$ is the unique faithful fixed point of $\Phi_{AB}$. The one-dimensional axis $\{\alpha\,\sigma_{AB}\}$ is invariant under $\Phi_{AB}$ and attracts all states at a rate bounded below by $\delta_A\delta_B$.
\end{remark}

\subsection{Diamond-norm coupling bound (expanded details)}\label{subsec:diamond-bound}
\begin{theorem}[Diamond-norm coupling bound]
\label{thm:diamond}
Let $L_A:=\Phi_A\otimes \id_B$ and $L_B:=\id_A\otimes \Phi_B$, where $\Phi_A\succeq \delta_A \,\mathcal{E}_A$ and $\Phi_B\succeq \delta_B \,\mathcal{E}_B$ (Doeblin minorization). 
For any CPTP “coupling” map $\Psi$ on $AB$, the order-commutator superoperator
$\mathcal{C}:=L_B\!\circ\!\Psi\!\circ\!L_A \;-\; L_A\!\circ\!\Psi\!\circ\!L_B$
satisfies the diamond-norm bound
\[
\|\mathcal{C}\|_{\diamond}\;\le\; 2\bigl(1-\delta_A\delta_B\bigr).
\]

\end{theorem}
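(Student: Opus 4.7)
The plan is to run a Doeblin decomposition on each factor and track the $\delta_A\delta_B$ leading contribution through $\Psi$. First I would lift the given minorizations to the composite, writing
\[
  L_A = \delta_A P_A + (1-\delta_A) L_A^{\mathrm{res}},
  \qquad
  L_B = \delta_B P_B + (1-\delta_B) L_B^{\mathrm{res}},
\]
with $P_A := \mathcal{E}_A\otimes\id_B$, $P_B := \id_A\otimes\mathcal{E}_B$, and CPTP residuals $L_A^{\mathrm{res}}, L_B^{\mathrm{res}}$. Expanding the order-commutator $\mathcal{C}= L_B\Psi L_A - L_A\Psi L_B$ bilinearly then produces four antisymmetrized terms with weights $\delta_A\delta_B$, $\delta_A(1-\delta_B)$, $(1-\delta_A)\delta_B$, and $(1-\delta_A)(1-\delta_B)$.

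The three terms carrying at least one residual factor are each differences of CPTP maps composed with $\Psi$, so in diamond norm they contribute at most $2$ times their scalar weight; summing gives an off-diagonal contribution of at most $2(1-\delta_A\delta_B)$. The task is then to show that the leading matched-seed antisymmetrizer $\delta_A\delta_B(P_B\Psi P_A - P_A\Psi P_B)$ does not spoil this bound. Here the structural fact to exploit is that $P_A$ and $P_B$ act on disjoint tensor factors, so $P_A P_B = P_B P_A = \mathcal{E}_A\otimes\mathcal{E}_B$; in the canonical Doeblin setting where $\mathcal{E}_X$ is trace-and-replace onto a seed state $\sigma_X$, both $P_B\Psi P_A$ and $P_A\Psi P_B$ collapse to maps along the common double-seed axis $\sigma_A\otimes\sigma_B$, so their antisymmetric difference is absorbed and the triangle inequality closes the argument.

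The main obstacle is precisely securing this matched-seed step: for generic $\mathcal{E}_A,\mathcal{E}_B$ the equality $P_B\Psi P_A = P_A\Psi P_B$ is not automatic, and one needs either a compatibility hypothesis (for example, conditional-expectation seeds that commute with $\Psi$, or the degenerate case $\mathcal{E}_X=\id_X$) or the replacement-seed convention already used in Example~\ref{ex:dephasing} and Lemma~\ref{lem:product}. A more robust alternative, which I would fall back on if the direct cancellation resists, is to construct a common CP minorant of the form $\delta_A\delta_B M$, with $M:=\mathcal{E}_A\otimes\mathcal{E}_B$, for both $L_B\Psi L_A$ and $L_A\Psi L_B$ by chasing the product Doeblin bound of Lemma~\ref{lem:product} through $\Psi$ at the level of Kraus/Choi representatives; writing $L_B\Psi L_A = \delta_A\delta_B M + (1-\delta_A\delta_B) N_1$ and $L_A\Psi L_B = \delta_A\delta_B M + (1-\delta_A\delta_B) N_2$ with CPTP $N_i$, one gets $\mathcal{C} = (1-\delta_A\delta_B)(N_1-N_2)$ and the bound $\|\mathcal{C}\|_\diamond \le 2(1-\delta_A\delta_B)$ is then immediate.
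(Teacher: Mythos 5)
You have correctly located the crux: after the bilinear Doeblin expansion, the three cross terms carrying a residual factor contribute at most $2\bigl(\delta_A(1-\delta_B)+(1-\delta_A)\delta_B+(1-\delta_A)(1-\delta_B)\bigr)=2(1-\delta_A\delta_B)$, and everything hinges on killing the leading term $\delta_A\delta_B\,(P_B\Psi P_A-P_A\Psi P_B)$. But the matched-seed cancellation you invoke is false even in the trace-and-replace convention. Take $\mathcal{E}_X(\rho)=\Tr[\rho]\,\sigma_X$ and $\Psi=\mathrm{SWAP}$. Then
\[
P_B\Psi P_A(\rho)=\rho_B\otimes\sigma_B,
\qquad
P_A\Psi P_B(\rho)=\sigma_A\otimes\rho_A,
\]
because $\Psi$ routes the register that was \emph{not} reset across the cut; neither output lies on the double-seed axis $\sigma_A\otimes\sigma_B$, and the two maps differ. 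Your fallback via a common CPTP minorant fails on the same example: with $\sigma_A=\sigma_B=|0\rangle\langle 0|$ and $\rho=|1\rangle\langle 1|\otimes|1\rangle\langle 1|$ the two outputs are orthogonal rank-one projections, so any common CP minorant must annihilate $\rho$ and cannot be (a positive multiple of) a trace-preserving map.

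This gap cannot be closed by a better argument, because the statement itself fails: choosing $\Phi_A=\mathcal{E}_A$, $\Phi_B=\mathcal{E}_B$ with $\delta_A=\delta_B=1$ and $\Psi=\mathrm{SWAP}$ gives $\mathcal{C}(\rho)=|1\rangle\langle 1|\otimes|0\rangle\langle 0|-|0\rangle\langle 0|\otimes|1\rangle\langle 1|$, hence $\|\mathcal{C}\|_\diamond\ge 2$ while the claimed bound is $0$. The paper's own proof founders on the same point in a different place: it reduces the claim to $\|[L_B,\Psi]\|_\diamond\le 1-\delta_A\delta_B$ ``by Lemmas~\ref{lem:product} and~\ref{lem:traceless}'', but neither lemma controls a commutator with an arbitrary coupling $\Psi$, and $[L_B,\mathrm{SWAP}]\neq 0$ even when $\delta_B=1$. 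A correct version needs either an additional hypothesis on $\Psi$ (e.g., that it commutes with the seed resets, or is non-signalling across the $A|B$ cut) or a weaker right-hand side such as the trivial $2$; under such a hypothesis your decomposition handles the cross terms exactly as written, so your structure is salvageable even though the theorem as stated is not.
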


We record a complete proof of Theorem~\ref{thm:diamond} for convenience.

\begin{proof}[Proof of Theorem~\ref{thm:diamond}]
Let \(L_A := \Phi_A \otimes \mathrm{id}_B\) and \(L_B := \mathrm{id}_A \otimes \Phi_B\).
Since \(L_A\) and \(L_B\) act on disjoint tensor factors, we have \(L_A L_B = L_B L_A\).
Writing
\[
  \mathcal{C} \;=\; L_B \circ \Psi \circ L_A \;-\; L_A \circ \Psi \circ L_B ,
\]
add and subtract \(\Psi \circ L_A \circ L_B\) to obtain
\[
  \mathcal{C} \;=\; [L_B,\Psi]\!\circ\! L_A \;+\; [\Psi,L_A]\!\circ\! L_B,
\]
where \([S,T]=S\!\circ\!T - T\!\circ\!S\).
By submultiplicativity and \(\|L_A\|_\diamond=\|L_B\|_\diamond=1\),
\[
  \|\mathcal{C}\|_\diamond \;\le\; \|[L_B,\Psi]\|_\diamond \;+\; \|[\Psi,L_A]\|_\diamond .
\]
Finally, Lemma~\ref{lem:product} and Lemma~\ref{lem:traceless} give
\(\|[L_B,\Psi]\|_\diamond,\|[\Psi,L_A]\|_\diamond \le (1-\delta_A\delta_B)\),
whence \(\|\mathcal{C}\|_\diamond \le 2(1-\delta_A\delta_B)\).
\qedhere
\end{proof}

\subsection{Local/nonlocal split inequality (full proof)}\label{subsec:split-proof}
We prove Theorem~\ref{thm:eq-window} in the sharp (projective) case and indicate the extension
to general Lüders POVMs via Naimark dilation.

\paragraph{Setup (projective, with coupling).}
Let $\{P_x\}$ and $\{Q_y\}$ be PVMs on $A$ and $B$ and let the coupling be a unitary $U$ on $AB$.
Set $\tilde P_x:=U^\dagger(P_x\otimes I)U$, $\tilde Q_y:=U^\dagger(I\otimes Q_y)U$. For a state $\rho$,
\[
\Delta(x,y;\rho)=\Tr[\rho(\tilde P_x\tilde Q_y\tilde P_x-\tilde Q_y\tilde P_x\tilde Q_y)]=:\Tr[\rho\,R_{x,y}],
\]
so $|\Delta|\le\|R_{x,y}\|$.

\paragraph{Commutator reduction.}
Let $C:=[\tilde P_x,\tilde Q_y]$. Using $P^2=P,Q^2=Q$,
\[
R_{x,y}=\tilde P_x\tilde Q_y\tilde P_x-\tilde Q_y\tilde P_x\tilde Q_y=C\tilde P_x-\tilde Q_y C,
\]
hence $\|R_{x,y}\|\le\|C\tilde P_x\|+\|\tilde Q_y C\|\le 2\|[\tilde P_x,\tilde Q_y]\|
=2\|[U^\dagger(P_x\!\otimes\!I)U,\,U^\dagger(I\!\otimes\!Q_y)U]\|$.

\paragraph{Local terms.}
If local Lüders steps occur within $A$ or $B$, then similarly
$\| P_xP_{x'}P_x - P_{x'}P_xP_{x'} \|\le 2\|[P_x,P_{x'}]\|$ and
$\| Q_yQ_{y'}Q_y - Q_{y'}Q_yQ_{y'} \|\le 2\|[Q_y,Q_{y'}]\|$.
%Adding the nonlocal and local parts by triangle inequality yields Theorem~\ref{thm:split}.
\begin{theorem}[Split bound]
\label{thm:split}
Adding the nonlocal and local parts by triangle inequality yields the bound 
$|\Delta_{AB}| \le |\Delta_{\mathrm{loc}}| + |\Delta_{\mathrm{nonloc}}|$.
\end{theorem}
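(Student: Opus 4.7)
The plan is essentially a direct triangle inequality once the total order-effect signal is split cleanly into contributions that live on $A$ or $B$ alone (the local part) and contributions that genuinely mix the two tensor factors through the coupling $U$ (the nonlocal part). First I would fix a concrete interleaved protocol combining local Lüders updates $P_x\to P_{x'}$ on $A$, local updates $Q_y\to Q_{y'}$ on $B$, and a coupled measurement block $(\tilde P_x,\tilde Q_y)$, and write the full deviation as $\Delta_{AB}=\Tr[\rho\,R_{AB}]$ with $R_{AB}=R_{\mathrm{loc}}+R_{\mathrm{nonloc}}$ obtained by algebraic expansion of the order-difference operator.

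Next I would invoke the local commutator estimates $\|P_xP_{x'}P_x-P_{x'}P_xP_{x'}\|\le 2\|[P_x,P_{x'}]\|$ and $\|Q_yQ_{y'}Q_y-Q_{y'}Q_yQ_{y'}\|\le 2\|[Q_y,Q_{y'}]\|$ from the Local terms paragraph to bound $|\Delta_{\mathrm{loc}}|$, and the nonlocal estimate $\|R_{x,y}\|\le 2\|[\tilde P_x,\tilde Q_y]\|$ derived in the Commutator reduction paragraph (via $R_{x,y}=C\tilde P_x-\tilde Q_y C$ with $C=[\tilde P_x,\tilde Q_y]$) to bound $|\Delta_{\mathrm{nonloc}}|$. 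Setting $\Delta_{\mathrm{loc}}:=\Tr[\rho\,R_{\mathrm{loc}}]$ and $\Delta_{\mathrm{nonloc}}:=\Tr[\rho\,R_{\mathrm{nonloc}}]$ and applying the scalar triangle inequality $|a+b|\le|a|+|b|$ then yields the stated split $|\Delta_{AB}|\le|\Delta_{\mathrm{loc}}|+|\Delta_{\mathrm{nonloc}}|$.

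The main obstacle is bookkeeping rather than mathematics: for protocols that interleave local and nonlocal instruments in arbitrary order, the algebraic expansion of $R_{AB}$ can produce mixed commutator residues such as $[P_x\otimes I,\,\tilde Q_y]$ that are neither purely local nor purely nonlocal. I would absorb such terms into $\Delta_{\mathrm{nonloc}}$, since conjugation by $U$ is the sole source of nontrivial commutation between $A$- and $B$-projections, so $[P_x\otimes I,\tilde Q_y]=[P_x\otimes I,\,U^\dagger(I\otimes Q_y)U]$ is intrinsically a nonlocal quantity. A short induction on the number of instrument insertions then verifies that every commutator residue is unambiguously assigned to either $R_{\mathrm{loc}}$ or $R_{\mathrm{nonloc}}$ without double counting.

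The extension from sharp PVMs to general Lüders POVMs is handled by Naimark dilation: all commutator norms involved are isometry-invariant, so the local and nonlocal bounds pass through the enlarged Hilbert space unchanged, and the triangle inequality step is unaffected. This keeps the proof entirely within the projective framework already set up in Subsection~\ref{subsec:split-proof}.
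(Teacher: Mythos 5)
Your proposal follows essentially the same route as the paper: the theorem is an application of the scalar triangle inequality to the decomposition $\Delta_{AB}=\Delta_{\mathrm{loc}}+\Delta_{\mathrm{nonloc}}$, with the local and nonlocal pieces bounded by the commutator estimates established in the preceding paragraphs. Your additional bookkeeping about assigning mixed residues such as $[P_x\otimes I,\tilde Q_y]$ to the nonlocal part is a reasonable (and in fact more careful) treatment of a point the paper leaves implicit.
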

\paragraph{Extension to Lüders POVMs.}
For effects $E_x,F_y$ with Lüders updates and a general CPTP coupling $K$, use
Stinespring/Naimark dilation to projective measurements on $AB\otimes\mathsf E$ with a unitary
$\hat U$. Apply the sharp-case bound to $(\hat P_x,\hat Q_y,\hat U)$ and compress; CP of the
compression preserves the inequality. \qed

\begin{tcolorbox}[title=Reproducibility: data-to-rate pipeline, colback=white]
\textbf{Input:} counts $(n_{\mathrm{succ}}, n_{\mathrm{tot}})$ from instrument--specific tests. 
\textbf{Step 1 (minorization estimate):} compute $\hat\delta$ with an exact binomial lower interval (Clopper--Pearson) at level $1-\alpha$. 
\textbf{Step 2 (composition):} propagate via the product bound $\hat\delta_{\mathrm{comp}}\ge\prod_i \hat\delta_i$. 
\textbf{Step 3 (rate):} Step 3 (rate): certify an exponential mixing rate $\hat{\gamma}$ and a step
bound $k(\varepsilon)$ needed to achieve target error $\varepsilon$, using
the bounds in Section~4.1.
Artifacts: CSVs and scripts (Zenodo DOI: 10.5281/zenodo.17959208)
reproduce Fig.~1 and Table~2 end-to-end.
\end{tcolorbox}

\appendix
\section{Data-to-rate pipeline}\label{app:doeblin}
\subsection*{Clopper--Pearson one-sided lower bounds}

Let $X\sim \mathrm{Bin}(N,p)$. For confidence level $1-\alpha$, the one-sided Clopper--Pearson lower bound is
\[
p^{-}(X,N;\alpha)=
\begin{cases}
0, & X=0,\\[4pt]
\mathrm{BetaInv}\!\big(\alpha;\;X,\;N{-}X{+}1\big), & 1\le X\le N,
\end{cases}
\]
where $\mathrm{BetaInv}(q;a,b)$ denotes the $q$-quantile of the $\mathrm{Beta}(a,b)$ distribution.
For outcomes $i=1,\dots,m$ and prepared stencils $j$, with counts $X_{ij}$ from $N_j$ trials, define
\[
\widehat{\varepsilon}\;=\;\sum_{i=1}^{m}\min_{j}\,p^{-}\!\big(X_{ij},N_j;\alpha'\big),\qquad
\alpha'=\alpha/m.
\]
By a union bound, with probability at least $1-\alpha$,
$\varepsilon\ge \widehat{\varepsilon}$, hence
\[
\big\|\mathcal E^{n}(\rho)-\tau\big\|_1 \ \le\  (1-\widehat{\varepsilon})^{n}\,\|\rho-\tau\|_1\qquad (n=0,1,2,\dots).
\]

\subsection*{Pseudocode}
\noindent\textbf{Inputs:} counts $X_{ij}$, totals $N_j$ for outcomes $i=1..m$ and stencils $j$; confidence $\alpha\in(0,1)$.\\
\textbf{Output:} $\widehat{\varepsilon}$ and the mixing guarantee $\|\mathcal E^{n}(\rho)-\tau\|_1 \le (1-\widehat{\varepsilon})^{n}\|\rho-\tau\|_1$ with probability $\ge 1-\alpha$.
\begin{verbatim}
alpha_prime = alpha / m
epsilon_hat = 0
for i in {1..m}:
    L_i = +inf
    for j in stencils:
        # one-sided Clopper-Pearson lower bound for Bin(N_j, p_ij)
        p_lower = BetaInv(alpha_prime; X_ij, N_j - X_ij + 1)
        L_i = min(L_i, p_lower)
    epsilon_hat += L_i
# guarantee: ||E^n(rho)-tau||_1 <= (1 - epsilon_hat)^n ||rho - tau||_1
#            (w.p. >= 1 - alpha)
\end{verbatim}

\subsection*{Implementation notes}
\begin{itemize}
\item $\mathrm{BetaInv}$ is available in standard libraries (e.g.\ \texttt{scipy.stats.beta.ppf}).
\item Very small $N_j$ may yield $p^{-}=0$ frequently; increasing $N_j$ stabilizes the lower bound.
\item Multiple-testing corrections more refined than Bonferroni (e.g.\ Holm) can be substituted if desired.
\end{itemize}

\bibliographystyle{plainnat}

\section*{Acknowledgements}
The author used a large language model (OpenAI ChatGPT) for language polishing and formatting suggestions.
The author bears full responsibility for the manuscript’s accuracy and integrity.

\section*{Statements and Declarations}

\noindent\textbf{Funding:} None.

\noindent\textbf{Competing interests:} The author declares no competing interests.

\noindent\textbf{Data availability:} Data and code that support the findings of this study are provided as
Supplementary Material with the submission (file: \texttt{data\_code\_clean.zip}) and are
archived at Zenodo, DOI: 10.5281/zenodo.17959208.
They are also available from the author upon reasonable request.

\noindent\textbf{Author contributions:} Sole author — Conceptualization, Methodology, Formal analysis, Software, Validation, Visualization, Writing—original draft, and Writing—review \& editing.

\bibliography{nqs_refs}

\end{document}